\spnewtheorem{observation}{Observation}{\bfseries}{\itshape}
\spnewtheorem{myclaim}{Claim}{\bfseries}{\itshape}{\rmfamily}
\spnewtheorem{myproblem}{Problem}{\bfseries}{\itshape}
\newcommand{\old}[1]{{}}
\newcommand{\smallt}{$2 \leq t <2 \frac{1}{5}$}
\newcommand{\bigt}{$t \geq 2 \frac{1}{5}$}
\let\doendproof\endproof
\renewcommand\endproof{~\hfill\qed\doendproof}
\title{ Minimum Weight Euclidean $t$-spanner is NP-Hard
\thanks{Research is partially supported by the Lynn and William Frankel
Center for Computer Science and by
grant 680/11 from the Israel Science Foundation (ISF). }}
\author{Paz Carmi \and Lilach Chaitman-Yerushalmi}
\institute{ Department of Computer Science,\\ Ben-Gurion University of the Negev, Israel}
\begin{document}
\maketitle


\begin{abstract}

Given a set $P$ of points in the plane, an Euclidean $t$-spanner for $P$ is a 
geometric graph that preserves the Euclidean distances between 
every pair of points in $P$ up to a constant factor $t$.
The weight of a geometric graph refers to the total length of its edges.
In this paper we show that the problem of deciding whether there exists 
an Euclidean $t$-spanner, for a given set of points in the plane, of weight at most $w$ is NP-hard
for every real constant $t>1$, both whether planarity of the $t$-spanner is required or not.

\end{abstract}

\section{Introduction}\label{sec:Intro}


Consider a weighted graph $G=(V,E)$ with a weight function $w:E \rightarrow \mathds{R}^+$ over the edges.
For any two vertices $u,v \in V$, we denote the weight of the shortest path between $u$ and $v$ in $G$ by $\delta_G(u,v)$.
Given a spanning subgraph $G'$ of $G$, we define the \emph{dilation} of $G'$ (with respect to $G$) to be the value
$$ \max_{u,v \in V} \frac{\delta_{G'}(u,v)}{\delta_{G}(u,v)}.$$
Given a real value $t> 1$, a $t$-spanner of $G$ is a spanning subgraph $G'$ 
with dilation at most $t$ with respect to $G$.
Thus, the the shortest-path distances in $G'$ approximate shortest-path distances
in the underlying graph $G$ within an approximation ratio $t$.
Typically, $G$ is a dense graph with $\Omega(n^2)$ edges and 
the $t$-spanner $G'$ is desired to be sparse, preferably having only a linear number
of edges.


Spanners have been studied in many different settings. 
The various settings differ from one another in the characterization of the underlying graph $G$,
such as different topologies and specific weight functions over the edges,  
in the value of the required dilation $t$, and in the properties of the spanner $G'$, such as planarity.
We concentrate on the setting where the underlying graph is geometric.
In our context, a graph $G=(P,E)$ is called geometric graph or Euclidean graph
if its vertex set $P$ is a set of points in the plane 
and every edge $\{p,q\}$ in $E$ is the line segment $\overline{pq}$, 
weighted by the Euclidean distance between its endpoints $|pq|$.
Moreover, the underlying graphs we consider are complete graphs
and therefore we refer to their $t$-spanners as spanners of the point set $P$.
There is a vast body of literature on $t$-spanners in this geometric setting 
(see~\cite{GiriSmid07} for a comprehensive survey of the area).


The weight of a geometric graph is defined as the sum of the lengths of its edges. 
The weight is a good measure of the cost of building the network;
thus, it is often desirable to have spanners with low weight.
Since any spanner must connect all the points, the weight of a $t$-spanner is bounded 
from below by the weight of a minimum spanning tree $MST(P)$.
Chandra et al.~\cite{Chandra} have presented a greedy algorithm for constructing 
a $t$-spanner in $O(n^3 \log n)$ time, which has been proved 
by Das et al.~\cite{DasHN93,Das2} to have a weight of size $O(wt(MST(P)))$. 
The constant factor depends on the value $t$. 
A more efficient algorithm that computes the greedy spanner in $O(n^2\log n)$ 
was later developed by Bose et al.~\cite{BCFMS08}.
A fast implementation of a variant of the greedy algorithm that maintains the $O(wt(MST(P)))$ weight
and runs in $O(n \log^2 n)$ time has been developed by Das and Narasimhan~\cite{DN97}.
Those spanners approximate the weight of the minimum $t$-spanners within a constant factor that
depends on $t$, but they are not necessarily optimal.
We address the following decision problem and appropriate optimization problem.

\begin{myproblem}\label{prob:LWS$t$}
The \emph{\textsc{Low Weight $t$-Spanner} (LWS$t$)} decision problem:\\
\emph{Input:} A set $P$ of points in the plane and a constant $w>0$.\\
\emph{Output:} Whether there exists an Euclidean $t$-spanner for $P$ of weight at most $w$.
\end{myproblem}

\begin{myproblem}\label{prob:MWS$t$}
The \emph{\textsc{Minimum Weight $t$-Spanner} (MWS$t$)} problem:\\
\emph{Input:} A set $P$ of points in the plane.\\
\emph{Output:} A minimum weight Euclidean $t$-spanner for $P$.
\end{myproblem}


We show that for every real value $t>1$ the MWS$t$ and the LW$t$ problems are NP-hard.
This is done by a reduction 
from the \textsc{Partition} problem, defined as follows.
\begin{myproblem}\label{prob:Partition}
The \emph{\textsc{Partition}} problem:\\
\emph{Input:} A set $X=\{x_1,...,x_n\}$ of $n$ positive integers with even $\sum_{x \in X}x=R$.\\
\emph{Output:} Whether there exists a subset $X'\subset X$ such that $\sum_{x \in X'} x =R/2$.
\end{myproblem}
Klein and Kutz~\cite{KleinK06} have proved that the \textsc{Dilation Graph} 
and the \textsc{Plane Dilation Graph} problems are NP-hard by a reduction 
from the \textsc{Partition} problem as well.
The \textsc{Dilation Graph} problem (resp. the \textsc{Plane Dilation Graph} problem)
ask whether there exists a $t$-spanner (resp.  a plane $t$-spanner) with at most $m$ edges for a given 
set of points $P$, an integer $m$, and a real value $t>1$.
Note that here $t$ is part of the input. Their reduction returns an instance with $t=7$
for every instance of the \textsc{Partition} problem.


Various minimization problems of different parameters such as weight and number of edges 
of $t$-spanners have been proved to be NP-hard.
Cai et al. have proved in~\cite{Cai94} that for $t \geq 4$ the problem of determining the existence of 
a $t$-spanner with at most $m$ edges for an unweighted graph is NP-hard.
This implies that the problem of finding a minimum
$t$-spanner of weight at most $w$ for a weighted graph is NP-hard as well.
However, no similar conclusions regarding the NP-hardness of the problem 
in our geometric setting, i.e., of MWS$t$ and LWS$t$ can be deduced. 

Brandes and Handke have introduced a related problem of finding a minimum weight planar $t$-spanner
for weighted graphs. They have established it is NP-hard for $t>1$ by modifying the proof in~\cite{CaiC95}
of the NP-hardness of the tree $t$-spanner problem for weighted graphs.
A variation of the minimum weight planar $t$-spanner problem adjusted to the geometric 
setting discussed in this paper should restrict the underlying graph to be the complete Euclidean graph.
The appropriate decision and optimization problems are defined as follows.
\begin{myproblem}\label{prob:LWPS$t$}
The \emph{\textsc{Low Weight Plane $t$-Spanner} (LWPS$t$)} decision problem:\\
\emph{Input:} A set $P$ of points in the plane and a constant $w>0$.\\
\emph{Output:} Whether there exists an Euclidean plane $t$-spanner for $P$ of weight at most $w$.
\end{myproblem}

\begin{myproblem}\label{prob:MWPS$t$}
The \emph{\textsc{Minimum Weight Plane $t$-Spanner} (MWPS$t$)} decision problem:\\
\emph{Input:} A set $P$ of points in the plane and a constant $w>0$.\\
\emph{Output:} A minimum weight Euclidean Plane $t$-spanner for $P$.
\end{myproblem}
The reductions presented in this paper prove that the LWPS$t$ 
and MWPS$t$ problems are NP-hard for every $t>1$.

Regarding geometric graphs, Gudmundsson and Smid consider in~\cite{GudmundssonS09}
the problem of deciding whether a given geometric graph (not necessarily the complete graph)
contains a $t$-spanner with at most $m$ edges and prove it is NP-hard for every $t>1$.
This implies that the problem of finding a spanning graph of a given geometric graph
with at most $m$ edges and minimum dilation is also NP-hard.
In this paper we consider the minimality of spanners in terms of weight
and not in terms of the number of edges as addressed in this problem.
In addition, we restrict the underlying graph to be the complete Euclidean graph.
Therefore, we suggest the following variation.
\begin{myproblem}\label{prob:MDG}
The \emph{\textsc{Minimum Dilation graph} (MDG)} decision problem:\\
\emph{Input:} A set $P$ of points in the plane and a constant $w>0$.\\
\emph{Output:} A minimum dilation Euclidean graph for $P$ of weight at most $w$.
\end{myproblem}
From the NP-hardness of the LWS$t$ problem proved in this paper, 
we may deduce the NP-hardness of the MDG problem.

Note that although all the problems presented here refer to the restricted case 
where the underlying graph is the complete Euclidean graph, 
our results, obviously, apply to the modified problems where the underlying graph is
a general geometric graph (not necessarily the complete graph) and 
where the underlying graph is a general weighted graph (not necessarily geometric).

The rest of the paper is organized as follows.
In section~\ref{sec:defs} we define new terms and make some technical observations 
to be used in the reductions proofs.
The reductions are presented and proved in section~\ref{sec:red}. 
First the reductions idea is outlined and later described in more detail,
where the reduction for $t \geq 2$ is given in subsection~\ref{subsec:t>2}
and the reduction for $1<t<2$ is given in subsection~\ref{subsec:1<t<2}.


\section{Definitions and Technical Lemmas}\label{sec:defs}

\begin{definition}
Given a path $\{p,s,q\}$, a $t$-\emph{shortcut} refers to the addition of the edge $\{p,q\}$
and possibly a removal of one of the edges $\{p,s\}$ or $\{s,q\}$,
as long as the obtained graph is a $t$-spanner for $\{p,s,q\}$.
\end{definition}

\begin{definition}\label{def:effi}
Given a path $Q=\{p,s,q\}$ and
a \emph{$t$-shortcut} that results in a graph $G$, 
\begin{itemize}
\item the \emph{benefit} of the \emph{t-shortcut} is defined as $\delta_Q(p,q)-\delta_{G}(p,q)=|ps|+|sq|-|pq|$,
\item the \emph{cost} of the \emph{t-shortcut} is defined as $weight(G)-weight(Q)$, and
\item the \emph{efficiency} of the \emph{t-shortcut} is defined as the ratio between its \emph{benefit} and its \emph{cost},
i.e., $\frac{\delta_Q(p,q)-\delta_{G}(p,q)}{weight(G)-weight(Q)}$.
\end{itemize}
\end{definition}


\begin{lemma}\label{lem:isos}
Given two paths $Q=\{p,s,q\}$ and $Q'=\{p',s',q'\}$, such that $|ps|=|sq|$, $|p's'|=|s'q'|$,
and $\angle(psq)<\angle(p's'q')$ (see Fig.~\ref{fig:isos_obt}), 
and let $e$ and $e'$ denote the two efficiencies of the most efficient $t$-shortcuts in $Q$ and $Q'$, respectively, then $e>e'$.
\end{lemma}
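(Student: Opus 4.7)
The plan is to exploit scale invariance and the symmetry of the isoceles configuration to reduce the comparison to a one-variable monotonicity statement. Efficiency is scale-invariant, since rescaling a path by any factor $k>0$ multiplies both benefit and cost by $k$, leaving their ratio unchanged. Hence I may assume $|ps|=|sq|=|p's'|=|s'q'|=1$. Writing $\theta=\angle(psq)$ and $\theta'=\angle(p's'q')$, the law of cosines gives $|pq|=2\sin(\theta/2)$ and $|p'q'|=2\sin(\theta'/2)$, and the hypothesis $\theta<\theta'$ yields $|pq|<|p'q'|$. So the whole lemma reduces to showing that, for unit-leg isoceles paths, the maximum efficiency is a strictly decreasing function of the base length $r:=|pq|$.

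Next I would enumerate the $t$-shortcuts available on such a path. By the definition we add $\{p,q\}$ and optionally remove one leg; the isoceles symmetry makes the two removal choices equivalent, leaving essentially two candidates: (a) add $\{p,q\}$ only, and (b) add $\{p,q\}$ and delete one leg. Option (a) yields a graph containing every direct edge, so it is always a $t$-shortcut. Option (b) leaves a path with induced distance $1+r$ between the two endpoints of the deleted leg, so it is a valid $t$-shortcut exactly when $r\leq t-1$. In both cases the benefit equals $2-r$, while the costs are $r$ (for (a)) and $r-1$ (for (b)), giving efficiencies
\[
e_a(r)=\frac{2-r}{r},\qquad e_b(r)=\frac{2-r}{r-1}\ \text{(when } r\leq t-1\text{)}.
\]
Differentiating gives $e_a'(r)=-2/r^{2}<0$ and $e_b'(r)=-1/(r-1)^{2}<0$, so each individual efficiency is strictly decreasing in $r$.

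To conclude, I compare the most efficient shortcut in $Q$ with that in $Q'$. Since $|pq|<|p'q'|$, any shortcut of type (b) valid for $Q'$ (i.e.\ $|p'q'|\leq t-1$) is automatically valid for $Q$, and by the monotonicity above its efficiency in $Q$ strictly exceeds its efficiency in $Q'$; the same holds for type (a) unconditionally. Thus every shortcut candidate used to compute $e'$ has a strictly more efficient counterpart in $Q$, giving $e>e'$. The main (and only mildly delicate) point is this containment of valid-shortcut sets together with the correct sign of the derivatives, which is what guarantees that taking the maximum over candidates preserves the strict inequality; everything else is straightforward trigonometry and one-variable calculus.
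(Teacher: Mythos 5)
Your proof is correct and follows essentially the same route as the paper's: after translating the angle hypothesis into a statement about the base-to-leg ratio, both arguments rest on the monotonicity of the two candidate efficiencies (add-only, $\frac{2-r}{r}$, and add-and-remove, $\frac{2-r}{r-1}$) in that ratio, together with the observation that validity of the leg removal transfers from the wider triangle to the narrower one. Your unit-leg normalization and candidate-wise domination is a slightly cleaner packaging of the paper's two-case analysis, but the underlying computation is identical --- including the shared implicit assumption that the base is longer than the legs, so that the removal's cost is positive and $\frac{2-r}{r-1}$ is genuinely decreasing on the relevant range.
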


\begin{proof}
By the fact that $\angle(psq)<\angle(p's'q')$, we conclude 
\begin{eqnarray}\label{eq:isos1}
\frac{|ps|}{|pq|}>\frac{|p's'|}{|p'q'|}.
\end{eqnarray}
The most efficient $t$-shortcut in $Q$ obviously includes the addition of $\{p,q\}$, 
and if $|ps|+|pq|\leq t|qs|$ it also includes the removal of $\{q,s\}$ (or alternatively $\{p,s\}$). 

If indeed $|ps|+|pq|\leq t|qs|$, then 
\begin{eqnarray*}
e &=& \frac{|ps|+|sq|-|pq|}{|pq|-|sq|} 
  \;=\; 1/(\frac{|pq|}{|ps|}-1)-1  \\
	&>^{(\ref{eq:isos1})}& 1/(\frac{|p'q'|}{|p's'|}-1)-1 
	\;=\; \frac{|p's'|+|s'q'|-|p'q'|}{|p'q'|-|s'q'|} \;\geq\; e'.
\end{eqnarray*}
Thus, $e>e'$ as required. 

Otherwise, $|ps|+|pq|> t|qs|$ and therefore 

\begin{eqnarray*}
 1+\frac{|pq|}{|qs|}> t 
 \;\overset{(\ref{eq:isos1})}\Rightarrow \; 1+\frac{|p'q'|}{|q's'|}> t 
 \;	\Rightarrow \; |p's'|+|p'q'|> t|q's'|
\end{eqnarray*}
and neither $\{q,s\}$ nor $\{q',s'\}$ are removed from the respective most efficient $t$-shortcuts. 
Hence, 
\begin{eqnarray*}
e&=& \frac{|ps|+|sq|-|pq|}{|pq|}
 \;=\;  \frac{2|ps|}{|pq|}-1\\
 &>^{(\ref{eq:isos1})}& \frac{2|p's'|}{|p'q'|}-1 
 \;=\; \frac{|p's'|+|s'q'|-|p'q'|}{|p'q'|} = e',
\end{eqnarray*}
and $e>e'$ as required. 

\end{proof}
%
%


\begin{lemma}\label{lem:obt}
Given a path $Q=\{p,s,q\}$ such that $|ps|<|sq|$, $\angle(psq)>\frac{\pi}{2}$, and 
$-\cos(\angle(psq))\geq \frac{1}{k+1}$ for a positive value $k$ (see Fig.~\ref{fig:isos_obt}), 
the efficiency of a $t$-shortcut in $Q$ is less than $k$.
\end{lemma}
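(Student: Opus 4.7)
The plan is to set $a = |ps|$, $b = |sq|$, $c = |pq|$, and $\theta = \angle(psq)$, then reduce the problem to a purely algebraic inequality in these quantities using the law of cosines and the hypothesis $-\cos\theta \geq 1/(k+1)$.

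First I would enumerate the possible $t$-shortcuts. Each adds the edge $\{p,q\}$ and optionally removes one of $\{p,s\}$ or $\{s,q\}$, so the three candidate efficiencies are
\begin{equation*}
\frac{a+b-c}{c}, \qquad \frac{a+b-c}{c-a}, \qquad \frac{a+b-c}{c-b}.
\end{equation*}
Since the numerators agree, the largest efficiency corresponds to the smallest positive denominator; because $a < b$, this is the third expression (whenever it is a valid shortcut). Positivity of $c-b$ follows from the law of cosines together with $\cos\theta < 0$, which gives $c^{2} = a^{2}+b^{2}+2ab(-\cos\theta) > b^{2}$. It therefore suffices to prove that the third quotient is strictly less than $k$.

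Next I would rewrite the target inequality $\frac{a+b-c}{c-b} < k$ in the clean equivalent form
\begin{equation*}
c \;>\; b + \frac{a}{k+1}.
\end{equation*}
Both sides are positive, so I can square and aim to show $c^{2} > \bigl(b + \tfrac{a}{k+1}\bigr)^{2}$. Expanding the right-hand side yields $b^{2} + \tfrac{2ab}{k+1} + \tfrac{a^{2}}{(k+1)^{2}}$. Now I would substitute the law-of-cosines expression for $c^{2}$ and apply the hypothesis $-\cos\theta \geq \tfrac{1}{k+1}$ to get the lower bound $c^{2} \geq a^{2} + b^{2} + \tfrac{2ab}{k+1}$. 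Subtracting the expanded right-hand side leaves $a^{2} - \tfrac{a^{2}}{(k+1)^{2}}$, which is strictly positive since $k > 0$ implies $(k+1)^{2} > 1$. This completes the chain and yields efficiency $< k$.

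There is no real obstacle here; the statement is essentially a law-of-cosines computation. The only mild care-point is verifying that the shortcut removing $\{s,q\}$ really does dominate the other two candidates in efficiency — handled by the observation that $a < b$ makes its denominator the smallest — and ensuring the inequality chain is strict (which it is, driven by $(k+1)^{2} > 1$). If the shortcut removing $\{s,q\}$ happens not to be a valid $t$-shortcut, the same argument applied to the other, larger denominators only strengthens the bound, so the lemma holds in all cases.
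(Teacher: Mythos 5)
Your proof is correct and follows essentially the same route as the paper's: both reduce the claim via the law of cosines and the hypothesis $-\cos\theta \geq \tfrac{1}{k+1}$ to the inequality $c > |sq| + \tfrac{|ps|}{k+1}$, which bounds the best efficiency $\tfrac{a+b-c}{c-|sq|}$ strictly below $k$. The only difference is cosmetic (swapped roles of $a$ and $b$) plus your slightly more explicit justification that the shortcut removing the longer edge $\{s,q\}$ dominates the other candidates, which the paper compresses into the single line $e \leq \frac{a+b-c}{c-a}$.
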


\begin{proof}
Let $e$ denote the efficiency of the most efficient $t$-shortcut in $Q$ and let $a=|sq|$, $b=|ps|$, $c=|pq|$, and $\alpha=\angle(psq)$.
Note that $e \leq \frac{a+b-c}{c-a}$.

Since it is given that $-\cos\alpha \geq \frac{1}{k+1}$,
by the cosines law we have
$$c^2 = a^2+b^2-2ab\cos\alpha \geq a^2+b^2+\frac{2ab}{k+1} > a^2+\frac{b^2}{(k+1)^2}+\frac{2ab}{k+1}= (a+\frac{b}{k+1})^2.$$
Thus, we receive $c > a+\frac{b}{k+1}$  
and hence $e \leq\frac{a+b-c}{c-a} < k$.

\end{proof}
\begin{figure}[htb]
    \centering
        \includegraphics[width=1\textwidth]{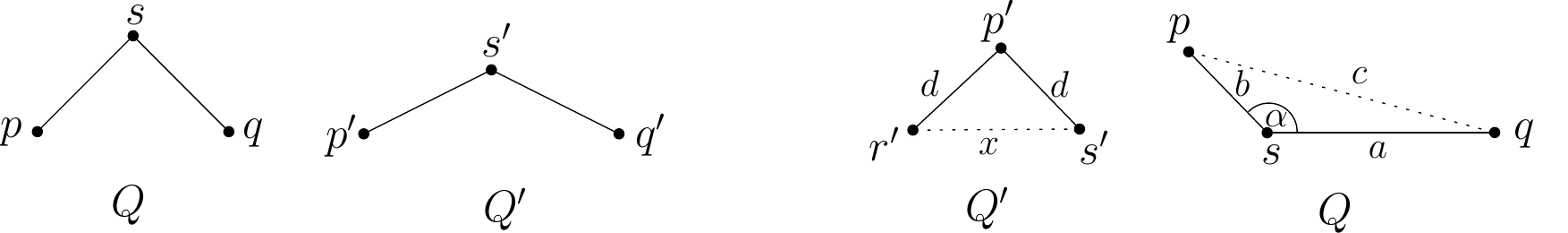}
    \caption{On the left, the paths $Q$ and $Q'$ as defined in Lemma~\ref{lem:isos};
    on the right, the paths $Q$ and $Q'$ as defined in Lemma~\ref{lem:obt} and Corollary~\ref{cor:obt} .}
    \label{fig:isos_obt}
\end{figure}


\begin{corollary}\label{cor:obt}
Given four points $\{r',s',s,q\}$ on a line and two additional points $p$ and $p'$,
such that $|r'p'|=|p's'|$, $|ps|<|sq|$, and the two edges $(p,s)$ and $(p',s')$ are parallel,
we consider the two paths $Q'=\{r',p',s'\}$ and $Q=\{p,s,q\}$ (see Fig.~\ref{fig:isos_obt}). 
Let $e'$ and $e$ denote the two efficiencies of the most efficient $t$-shortcuts in $Q'$ and $Q$, respectively, then $e'>e$.
\end{corollary}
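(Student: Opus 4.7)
My plan is to reduce the comparison to the single parameter $\xi := -\cos\angle(psq)$ and then invoke Lemma~\ref{lem:obt}. First, I would translate the parallelism and collinearity hypotheses into an identity between the angle at $s$ in $Q$ and the apex angle at $p'$ in $Q'$. Since $r',s',s,q$ are collinear and the segments $(p,s)$ and $(p',s')$ are parallel (with $p,p'$ on the same side of the line, as in Fig.~\ref{fig:isos_obt}), a corresponding-angles argument gives $\angle(p's's)=\angle(psq)$, and since $r'$ and $s$ lie on opposite sides of $s'$ along the line one gets $\angle(p's'r')=\pi-\angle(psq)$. Combined with $|r'p'|=|p's'|$, the isosceles triangle $r'p's'$ then has apex angle $\beta := \angle(r'p's') = 2\alpha-\pi$, where $\alpha := \angle(psq)$. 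In particular $\alpha \in (\pi/2,\pi)$, so $\xi \in (0,1)$, and $\sin(\beta/2)=\xi$; hence the base has length $|r's'| = 2L\xi$, where $L := |p's'|$.

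Second, I would compute $e'$ directly from the isosceles geometry of $Q'$. Adding the shortcut edge $\{r',s'\}$ has benefit $2L(1-\xi)$. Its cost is either $2L\xi$, if no path edge is removed, giving efficiency $(1-\xi)/\xi$, or $L(2\xi-1)$, if the $t$-spanner condition $1+2\xi\le t$ permits removing one leg, giving efficiency $2(1-\xi)/(2\xi-1)$; an elementary check shows the latter strictly exceeds the former whenever $\xi>1/2$. In every case, therefore, $e' \ge (1-\xi)/\xi$.

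Finally, I would bound $e$ from above by invoking Lemma~\ref{lem:obt} applied to $Q$ with $k := (1-\xi)/\xi$. Its hypotheses are part of the setup of the corollary ($|ps|<|sq|$ and $\angle(psq)>\pi/2$), and $-\cos\alpha = \xi = 1/(k+1)$ holds by construction. The lemma yields $e < k = (1-\xi)/\xi \le e'$, as required.

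The main subtlety is getting the angle identity in the first step correct: one must track the orientation of the parallel edges and the side of the line on which $p,p'$ lie carefully, so as to obtain the supplementary relation $\angle(p's'r')=\pi-\alpha$ rather than an equal-angle relation, and so that $\alpha$ is automatically obtuse (guaranteeing $\beta>0$ and $\xi\in(0,1)$). Once $\beta = 2\alpha-\pi$ is in place, the whole comparison collapses onto the one parameter $\xi$, and the conclusion follows at once from Lemma~\ref{lem:obt}.
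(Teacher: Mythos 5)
Your proof is correct and follows essentially the same route as the paper's: both bound $e'$ from below by the efficiency of the no-removal shortcut (your $(1-\xi)/\xi$ equals the paper's $2d/x-1$ with $\xi=x/(2d)$), derive $-\cos\angle(psq)=|r's'|/(2|p's'|)$ from the parallel/collinear configuration, and conclude by applying Lemma~\ref{lem:obt} with exactly that value of $k$. The only difference is presentational: you carry out the angle chase more explicitly and parameterize by $\xi$ rather than by $x$ and $d$.
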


\begin{proof}
Let $x=|r's'|$ and $d=|r'p'|=|p's'|$, then $e' \geq \frac{2d-x}{x}=2d/x-1$.
By the sines law, $\angle(rps)=2\arcsin(\frac{x}{2d})$.
Since the angle $\angle(psq)$, denoted by $\alpha$, equals to the 
exterior angle of the isosceles triangle $\triangle(rps)$, we have
$\alpha = 2\arcsin(\frac{x}{2d})+\pi/2-\arcsin(\frac{x}{2d})/2=\pi-\arccos(\frac{x}{2d})$ and $\cos(\alpha) = -\frac{x}{2d}$.
By Lemma~\ref{lem:obt}, we receive $e<2d/x-1 \leq e'$.
\end{proof}


\section{The reduction}\label{sec:red}

In this section we show that the LWS$t$ decision problem (Problem~\ref{prob:LWS$t$})
and the MWS$t$ optimization problem (Problem~\ref{prob:MWS$t$}) 
are NP-hard for every constant $t>1$.
We prove the NP-hardness of the LWS$t$ problem by a reduction 
from the \textsc{Partition} problem (Problem~\ref{prob:Partition}).
The NP-hardness of the MWS$t$ problem 
follows from an obvious reduction from the appropriate 
decision problem (the LWS$t$ problem).
We propose different reductions for $1<t<2$ and for $t \geq 2$;
however, both follow the same core ideas.

Given an instance $X=\{x_1,...,x_n\}$ with $\sum_{x \in X} x =R$ for the PARTITION problem,
both reductions output a weight $w$ and a set $P$ of linear size in $n$ 
that is composed of points distributed along a path 
connecting $n$ triples of isosceles triangles' vertices.
Each isosceles triangle gadget is associated with a value $x_i$ among $x_1,...,x_n$
and the length of its edges are derived from the value $x_i$. 
The distances between adjacent points on the path connecting all triangles
are derived from the sum $R$.

In order to prove the correctness of the reduction, namely, 
that a subset $X'\subset X$ with $\sum_{x \in X'} x =R/2$ exists 
iff there exists a $t$-spanner for $P$ of weight at most $w$, 
we use the same core method in both cases.
First, we observe that the minimum connected graph over $P$ forms a path
$\{p=p_1,...,p_m=q\}$ of weight (length) $w-c \cdot R= t|pq| + l \cdot R$. 
Meaning, a total shortening of $l \cdot R$ should be performed on the path
at the cost of at most $c\cdot R$.

We show that any shortening can be considered as a set of independent $t$-shortcuts
and that the most \emph{efficient} (as defined in Definition~\ref{def:effi}) 
$t$-shortcuts are those that involve adding a base of
an isosceles triangle gadget and their efficiency is exactly $l/c$.
We refer to such $t$-shortcuts as \emph{gadget $t$-shortcuts}.
Therefore, no other shortcut may be applied and 
the distance between $p$ and $q$ can be decreased to $t|pq|$ iff there  
exists a set of gadget $t$-shortcuts with a total benefit of $l\cdot R$.
Our construction ensures that a set of gadget $t$-shortcuts 
with a total benefit of $l\cdot R$ exists iff
there exists a subset $X'\subset X$ such that $\sum_{x \in X'} x =R/2$
and applying those $t$-shortcuts creates a $t$-spanner for $P$.
Moreover, the $t$-spanner obtained by applying those $t$-shortcuts is a plane graph.
Therefore, our reductions also prove the NP-hardness of the LWPS$t$ problem (Problem~\ref{prob:LWPS$t$})
and the MWPS$t$ problem (Problem~\ref{prob:MWPS$t$}).

The reduction for $t \geq 2$ is a bit simpler and therefore 
described first, in the next subsection,
followed by the reduction for $1<t< 2$ in subsection~\ref{subsec:t>2}.


\subsection{The reduction for $t \geq 2$}\label{subsec:t>2}

Given a valid input for the \textsc{Partition} problem $X=\{x_1,...,x_n\}$ with $\sum_{x \in X}x = R$,
the reduction outputs a valid input for the SW$t$ problem $(P,w)$ as elaborated next.
The points of $P$ are placed roughly along 3 sides of an axis-parallel rectangle,
the left side, the right side, and the top edge as illustrated in Fig.~\ref{fig:bigt}. 
The left and right sides are of length $R/2(t(n+2)-n-2\frac{1}{3})$; 
each is sampled by $\left\lceil{t(n+2)-n-2\frac{1}{3}}\right\rceil$ points
with regular spacing of $R/2$ (possibly except for the bottom point).
The top edge is actually a horizontal component of width $R(n+2)$, 
consists of $n$ isosceles triangle gadgets and connecting points.
The top edge itself is sampled by $2n$ endpoints of the $n$ isosceles triangles' bases, 
located among $n+1$ segments of length $R$,
halved by a middle point. The additional vertex of each triangle is located above the rectangle.
The $i$-th isosceles triangle gadget has sides of length $\frac{5}{6}x_i$,
a base of length $x_i$, and subtended angle of $2\arcsin(\frac{3}{5}) < \pi/2$.
Overall, we get $|P|=4n+3+2 \left\lceil{t(n+2)-n-2\frac{1}{3}}\right\rceil$.
One can verify that $P$ is of polynomial size in $n$ and can be represented in 
polynomial size in $n$ and $\log R$, and thus can be generated in polynomial time.
\begin{figure}[htb]
    \centering
        \includegraphics[width=0.7\textwidth]{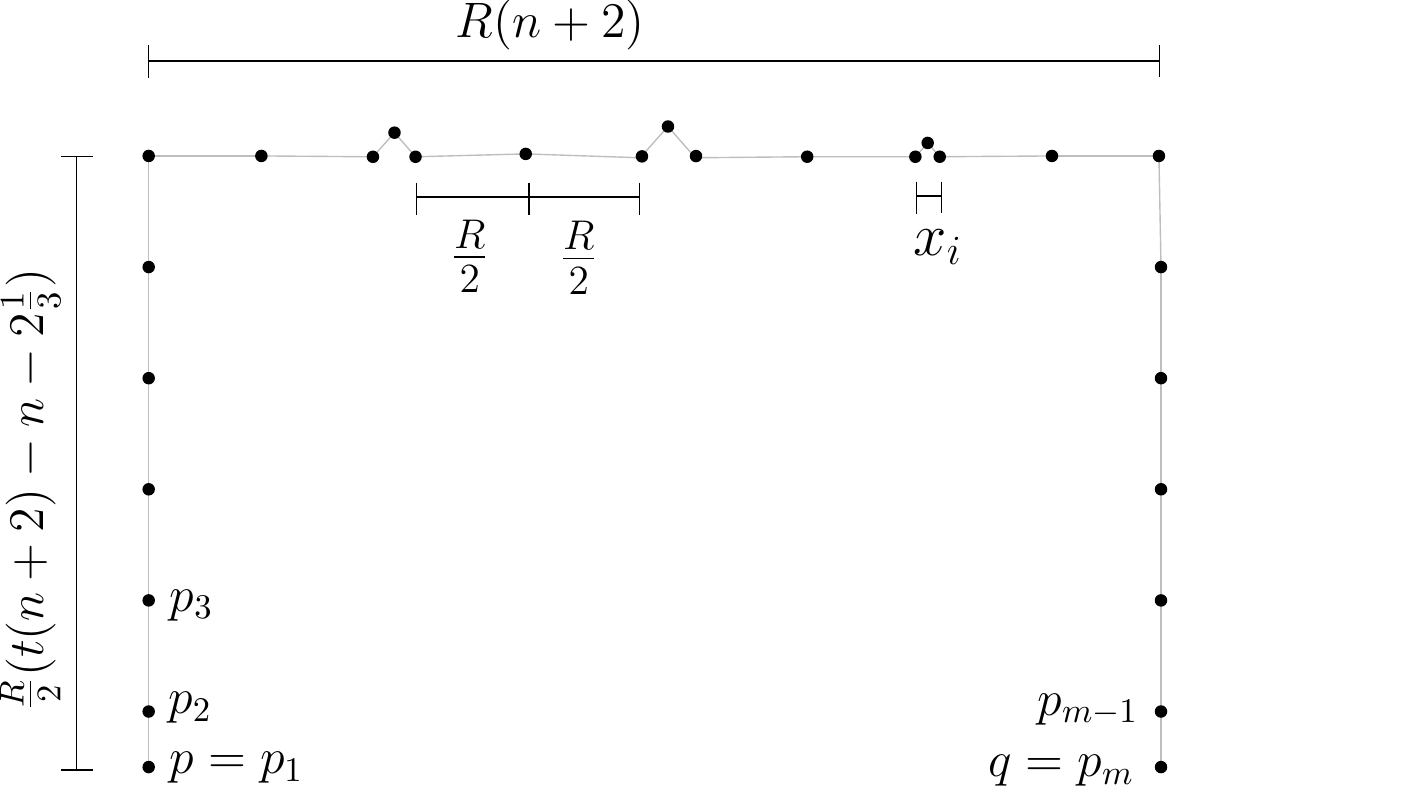}
    \caption{The set of points $P$ and its minimum connected graph as defined in the reduction for $t\geq2$.}
    \label{fig:bigt}
\end{figure}

The weight bound $w$ is defined a bit differently for \smallt{} and \bigt{}.
For \smallt{} we define $w=R(t(n+2)+\frac{5}{6})$
and for  \bigt{}, $w=R(t(n+2)+\frac{5}{12})$.
We are now ready to prove the correctness of the reduction; namely, 
that a subset $X'\subset X$ with $\sum_{x \in X'} x =R/2$ exists 
iff there exists a $t$-spanner for $P$ of weight at most $w$.

First, observe that the minimum connected graph over $P$ forms a path
$\{p=p_1,...,p_m=q\}$ as depicted in Fig.~\ref{fig:bigt}
of weight $2\cdot R/2(t(n+2)-n-2\frac{1}{3}) + R(n+1) + \frac{5}{3}R= R(t(n+2)+\frac{1}{3})$.
Since $|pq|=R(n+2)$, this path is $R/3$ units longer than a legal
$t$-spanning path from $p$ to $q$;
however, it is lighter then the given weight $w$.
For \smallt{} the remaining weight is $R/2$
and for \bigt{} the remaining weight is $R/12$.
Therefore, a total shortening of $R/3$ should be performed on the path
at the cost of at most $R/2$ or $R/12$ for \smallt{} and \bigt{} respectively.

Assuming there exists a $t$-spanner for $P$ of weight at most $w$,
let $E^+$ and $E^-$ be the sets of edges over $P$ that should be added and removed 
from $MST(P)$, respectively, in order to obtain a $t$-spanner $G=(P,E)$ 
that minimizes $weight(G) \leq w$.
We make some observations regarding the edge set $E^+$.

\begin{myclaim}\label{cl:short_edges}
The edge set $E^+$ contains only edges of length at most $R$.
\end{myclaim}

\begin{proof}
Since the initial graph is a minimum spanning tree and $G$ must be connected, $|E^-| \leq |E^+|$; 
moreover, there exists an injective function from $E^-$ to $E^+$ 
that maps every edge $e^- \in E^-$ to a longer edge $e^+ \in E^+$ 
whose addition creates a cycle in $MST(P)$ and enables the removal of $e^-$. 
Assume towards contradiction that $E^+$ contains an edge $e^+$ longer than $R$.
Since all edges in $MST(P)$ are of length at most $\max\{R/2,\max_{1\leq i\leq n}\{5x_i/6 \}\}\leq^{(*)} R/2$,
we have that $$\sum_{e \in E^+}{weight(e)} - \sum_{e \in E^-}{weight(e)} > R/2$$
and thus $weight(G)>w$. Inequality (*) holds under the legitimate assumption that all the elements in $X$ are
smaller than $R/2$.
\end{proof}

According to the above claim, only edges of length at most $R$ can be added to $MST(P)$.
There are exactly three types of such edges 
(while ignoring edges that overlaps subpaths in $MST(P)$, 
which are useless for shortening the distance between $p$ and $q$):

\begin{enumerate}
\item A base edge of an isosceles triangle gadget on the top edge.
\item An edge connecting a top vertex of an isosceles triangle on the top side 
			with a point on the top side itself that closes a cycle of size three.
\item An edge connecting the left (resp. right) edge with the top edge 
			and closes a cycle of size three.
\end{enumerate} 

\begin{observation}\label{obs:disjoint}
Any two edges in $E^+$ close edge-disjoint cycles in $MST(P)$.
\end{observation}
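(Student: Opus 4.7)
The proof plan rests on the fact that $MST(P)$ is a tree---in fact the zigzag path described in the construction---so each added edge $e\in E^+$ closes a unique cycle $C(e)$ in $MST(P)\cup\{e\}$, consisting of $e$ together with the unique $MST$-path between its endpoints. Two added edges therefore close edge-disjoint cycles iff their associated $MST$-paths share no edge, and I would need to establish exactly this.

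First, I would combine Claim~\ref{cl:short_edges} with the explicit placement of $P$---points spaced $R/2$ along the rectangle sides, length-$R$ segments halved by a middle point on the top, and gadgets with base $x_i\leq R/2$, sides $\frac{5}{6}x_i$, and apex height $\frac{2}{3}x_i$---to confirm that every useful edge in $E^+$ falls into one of the three enumerated types (gadget base, apex-to-adjacent-middle-point, corner shortcut) and that each such edge closes a triangle in $MST(P)\cup\{e\}$.

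Next, a short geometric case analysis enumerates when two such triangles could share an $MST$-edge. Triangles from distinct gadgets or distinct corners use disjoint $MST$-edges; the two possible apex-to-middle edges for the same gadget use the two opposite sides of that gadget and meet only at the apex vertex; and corner triangles do not share any $MST$-edge with a gadget triangle. The only remaining possibility is that both the base of some gadget $i$ and an apex-to-middle edge of the same gadget lie in $E^+$, in which case their triangles share the gadget side incident to the common apex and the relevant base vertex.

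To rule out this last configuration I would argue by contradiction with the minimality of $weight(G)$: if both edges were in $E^+$, I would show that the apex-to-middle edge is redundant, so that deleting it yields a strictly lighter graph that is still a $t$-spanner. With the base present, the detour from the apex through the base vertex to the middle point already provides dilation at most $t$ for every vertex pair whose shortest path in $G$ passed through the removed edge. The hard part will be this redundancy check: it requires computing local dilations using the specific gadget parameters (base $x_i$, sides $\frac{5}{6}x_i$, apex angle $2\arcsin(3/5)<\pi/2$, apex height $\frac{2}{3}x_i$) together with the hypothesis $t\geq 2$, and must cover every affected vertex pair; assembling these local estimates into a single contradiction is the technical heart of the argument.
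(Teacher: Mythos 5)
Your reduction to a single critical configuration matches the paper exactly: both proofs note that the cycles closed by edges of different gadgets, different corners, or the two opposite apex-to-middle edges of one gadget are automatically edge-disjoint, so the only pair that could violate the claim is a gadget base $e_1$ (type 1) together with an apex-to-middle edge $e_2$ (type 2) of the same gadget, and both proofs aim to exclude this by contradiction with the minimality of $weight(G)$. The problem is that your plan stops exactly where the real work begins. You propose to show $e_2$ is redundant by ``computing local dilations\dots for every affected vertex pair'' under the hypothesis $t\geq 2$, but you do not carry out a single one of these computations, and you yourself flag this as ``the technical heart of the argument.'' As written, the configuration is not actually ruled out, so the observation is not proved. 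There is also a pitfall your sketch glosses over: the detour ``from the apex through the base vertex to the middle point'' need not survive in $G$, since $E^-$ may have deleted the gadget side it uses (indeed, for $t\geq 2\frac{1}{5}$ the gadget $t$-shortcut does remove a side); a redundancy check for $e_2$ must be made relative to whatever MST edges actually remain.

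It is worth noting that the paper avoids the dilation bookkeeping you anticipate. Its argument is purely about path lengths: by minimality, $e_2$ must lie on some witnessing $t$-spanning path; since $MST(P)$ already $t$-spans every pair of points on the top side, that path must connect points on different sides of the rectangle and hence traverse the whole top side, and the same holds for $e_1$; so both $e_1$ and $e_2$ would have to lie on the shortest left-to-right path through the top. But the subpath using both (the zigzag through the apex via $e_2$, a triangle side, and the base $e_1$) is strictly longer than the corresponding $MST(P)$ subpath over the same endpoints, so it cannot be part of a shortest such path --- contradiction. This sidesteps any per-pair dilation estimate and any dependence on which side edges were removed. If you want to salvage your redundancy-based route, you would need to (i) restrict attention, as the paper does, to the pairs for which $e_2$ could possibly be necessary, and (ii) exhibit for each such pair an explicit alternative path in $G\setminus\{e_2\}$ and verify the $t$-bound; without that, the proof is incomplete.
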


\begin{proof}
The only pair of edges that may violate this condition is a pair of an edge of type 1 
and an edge of type 2.
Assume towards contradiction that there are two such edges, 
$e_1$ of type 1 and $e_2$ of type 2, in $E^+$.
Due to the minimality of $weight(G)$ there is necessarily a $t$-spanning path 
connecting two points in $G$ that contains $e_2$. 
Note that between every two points on the top side of the rectangle there is 
a $t$-spanning path in $MST(P)$; hence, $e_2$ is necessarily contained in 
a $t$-spanning path connecting two points on different sides of the rectangle
that passes through the top side.
According to the same arguments, the same holds for $e_1$. 
Meaning, $e_1$ and $e_2$ are both contained in the shortest path connecting
the endpoints of the top side of the rectangle (see Fig.~\ref{fig:zigzag}).
However, replacing $e_1$ and $e_2$ with the induced subgraph of $MST(P)$ 
over the endpoints of $e_1$ and $e_2$ results in a shorter path.
Therefore, the addition of $e_1$ and $e_2$ to $E^+$ is not necessary for
creating any $t$-spanning path and thus contradicts the minimality of $weight(G)$.
\end{proof}

\begin{figure}[htb]
    \centering
        \includegraphics[width=0.4\textwidth]{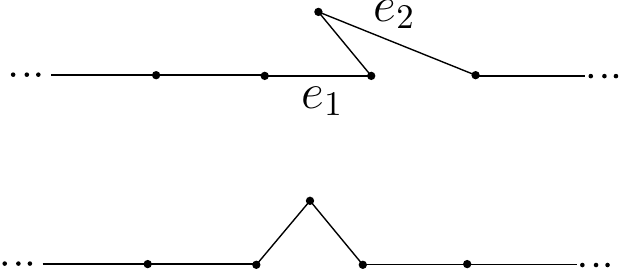}
    \caption{On the top, part of the $t$-spanning path between two points in $G$ 
    				containing $e_1$ and $e_2$, as discussed in the proof of Observation~\ref{obs:disjoint}.
    				On the bottom, part of the $t$-spanning path between the same two points in $MST(P)$.}
    \label{fig:zigzag}
\end{figure}

By Observation~\ref{obs:disjoint} we conclude that $G$ can be obtained from $MST(P)$ by applying 
a set of $t$-shortcuts in edge-disjoint subpaths of $MST(P)$.
As we have already observed, each $t$-shortcut involves an addition of an edge of type 1, 2, or 3
and possibly a removal of an edge in the closed cycle. 
The addition of an edge of type $1$, together with a removal of an isosceles triangle side
for \bigt{}, we refer to as a \emph{gadget} $t$-shortcut.

\begin{myclaim}\label{cl:tri}
The most efficient possible $t$-shortcuts are \emph{gadget} $t$-shortcuts. 
Their efficiency is $\frac{2}{3}$ for  \smallt{} and $4$ for \bigt{}.
\end{myclaim}

\begin{proof}
First, note that the removal of a triangle side in a gadget $t$-shortcut for \bigt{}
is indeed possible in the aspect of dilation.
Now let us examine the efficiency of a gadget $t$-shortcut.
The benefit of the $i$-th gadget $t$-shortcut is $\frac{5}{3}x_i-x_i = 2x_i/3$.
For \smallt{} its cost is $x_i$ and for \bigt{} its cost is $1\frac{5}{6}x_i-\frac{5}{3}x_i=x_i/6$; 
hence, its efficiency is indeed $2/3$ and $4$ for \smallt{} and \bigt{}, respectively.

Next, we observe every conceivable $t$-shortcut and show it is less efficient than a gadget $t$-shortcut.

\begin{itemize}

\item Addition of an edge of type 2 forms an obtuse triangle.
By Corollary~\ref{cor:obt}, this $t$-shortcut is less efficient than the gadget $t$-shortcut.

\item Addition of an edge of type 3 forms an isosceles right triangle 
and by Lemma~\ref{lem:isos} the gadget $t$-shortcut is more efficient.

\end{itemize}
\end{proof}

According to Claim~\ref{cl:tri}, only the gadget $t$-shortcuts have efficiency equal to
the ratio between the required shortening of the path between $p$ and $q$ and the remaining weight.
Therefore, those are the only shortcuts that can be applied.
Moreover, in order to achieve a $R/3$ shortening of the path connecting $p$ and $q$
without exceeding the weight bound, a set of gadget $t$-shortcuts 
with a total benefit of exactly $R/3$ should be applied. 

\begin{lemma}\label{lem:red}
A subset $X'\subset X$ with $\sum_{x \in X'} x =R/2$ exists 
iff there exists a $t$-spanner for $P$ of weight at most $w$.
\end{lemma}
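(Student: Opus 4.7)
The plan is to convert the bookkeeping already assembled into an exact arithmetic correspondence with \textsc{Partition}. The discussion preceding the lemma has in fact done most of the work for the $(\Rightarrow)$ direction: any minimum-weight $t$-spanner $G$ with $\mathit{weight}(G) \leq w$ must be obtainable from $MST(P)$ by a family of pairwise edge-disjoint gadget $t$-shortcuts whose total benefit is exactly $R/3$. I would simply read off the index set $I \subseteq \{1,\ldots,n\}$ of gadgets whose shortcut was applied; since each contributes benefit $\tfrac{2}{3}x_i$, the identity $\sum_{i\in I}\tfrac{2}{3}x_i = R/3$ becomes $\sum_{i\in I} x_i = R/2$, so $X' := \{x_i : i \in I\}$ is the required partition.

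For the converse direction, given $X'$ with $\sum_{x\in X'} x = R/2$, I would set $I = \{i : x_i \in X'\}$ and build $G$ by starting from $MST(P)$ and applying the gadget $t$-shortcut on every triangle whose index lies in $I$. Two properties then need to be verified. The weight calculation is direct: $MST(P)$ has weight $R(t(n+2)+\tfrac{1}{3})$, each gadget shortcut costs $x_i$ for \smallt{} and $x_i/6$ for \bigt, and summing over $I$ contributes $R/2$ or $R/12$ respectively, so $\mathit{weight}(G) = w$ on the nose in either regime. For the $t$-spanner property I would proceed by case analysis on the pair $u,v \in P$: if both lie inside one gadget the local $t$-spanner property that makes the gadget shortcut legal already suffices; for the pair $(p,q)$ the construction yields $\delta_G(p,q) = R(t(n+2)+\tfrac{1}{3}) - R/3 = t|pq|$; and for any other pair I would truncate the global $p$-$q$ spanning path to the two endpoints $u,v$, using the fact that any detour through a triangle apex has length at most $\tfrac{5}{3}x_i$ which is absorbed by $|uv|$ whenever $u$ or $v$ is outside the gadget.

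The main obstacle I anticipate is this last sub-case: uniformly certifying dilation for pairs that straddle a triangle apex, or that transition between the horizontal top edge and a vertical side at a corner of the rectangle. I would handle apex detours by invoking the same isosceles and obtuse-triangle estimates as in Lemma~\ref{lem:isos} and Corollary~\ref{cor:obt}, now applied locally rather than to the global $p$-$q$ path. For the corners I would exploit that the vertical sides are regularly sampled with spacing $R/2$ and that each gadget triangle has sides at most $\tfrac{5}{6}\max_i x_i < \tfrac{5R}{12}$, so every local detour is dominated by the adjacent straight-line distance. With these bounded-size estimates in hand the verification reduces to a finite set of inequalities that hold uniformly for all $t \geq 2$, completing the proof.
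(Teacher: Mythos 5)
Your proposal follows essentially the same route as the paper: the ``spanner $\Rightarrow$ partition'' direction reads off the index set from the forced family of gadget $t$-shortcuts of total benefit $R/3$ established by the preceding claims, and the converse builds $G$ by applying the gadget shortcuts for $X'$ and verifies the dilation bound by a case analysis on the location of $u$ and $v$. The only difference is one of explicitness: where you gesture at ``bounded-size estimates'' for the corner and top-edge pairs, the paper actually carries out the numeric bounds (e.g.\ $\delta_G(u,v)/|uv| \leq (|ua|+1\frac{8}{15}|av|)/\sqrt{|ua|^2+|av|^2} < 2 \leq t$), which is exactly the finite set of inequalities you anticipate.
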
 

\begin{proof}

[$\Leftarrow$] Assume towards contradiction that 
there exists a $t$-spanner for $P$ of weight at most $w$, however, 
a subset $X'\subset X$ with $\sum_{x \in X'} x =R/2$ does not exist. 
According to our construction, this implies that a set of gadget $t$-shortcuts 
with a total benefit of exactly $R/3$ does not exist. 
By the aforementioned analysis, this means that there is no graph over $P$
of weight at most $w$ in which there exists a path connecting $p$ and $q$ of length at most $t|pq|$.
This contradicts the existence of a $t$-spanner for $P$ of weight at most $w$.

[$\Rightarrow$] Assume that a subset $X'\subset X$ with $\sum_{x \in X'} x =R/2$ exists.
Let $G=(P,E)$ denote the graph obtained by applying the gadget $t$-shortcuts
in the triangles that correspond to the elements in $X'$.
We show that $G$ admits a $t$-spanner for $P$, i.e., 
for every two points $u$ and $v$ in $P$, $\delta_G(u,v)\leq t|uv|$.
We have already shown that there exists a $t$-spanning path between $p$ and $q$ in $G$.
Next we consider all other pairs of points $\{u,v\}\neq \{p,q\}$ in $P$:
\begin{enumerate}

\item One of $\{u,v\}$ is on the right side of the rectangle and the other is on the left side:
$\delta_G(u,v) \leq \delta_G(p,q)$ and $|uv| \geq |pq|$, hence $\delta_G(u,v)/|uv| \leq \delta_G(p,q)/|pq| = t$.

\item One of $\{u,v\}$ is on the right or left side of the rectangle and the other is on the top side:
Assume w.l.o.g. that $u$ is on the left side and $v$ is on the top side (see Fig.~\ref{fig:connections}). 
Let $a$ denote the left-top corner point of the rectangle.
We have
\begin{eqnarray*}
\delta_G(u,v)/|uv|&=& (\delta_G(u,a)+\delta_G(a,v))/|uv|\\
							&\leq^{(*)}& \max\{(|ua|+|av|(\frac{4}{5}+\frac{1}{5}\cdot 1\frac{5}{6}/\frac{1}{2}))/|uv|,\\ 
													&& \;\;\;\;\;\;\;\;\;(|ua|+|av|(\frac{2}{3}+\frac{1}{3}\cdot \frac{5}{3}))/|uv|\}\\
							&\leq& (|ua| + 1\frac{8}{15}|av|)/\sqrt{|ua|^2+|av|^2} \\
							&<^{(**)}& 2 \leq t|uv|
\end{eqnarray*}
Inequality (*) holds under the assumption that all the elements in $X$ are
smaller than $R/2$ and due to the notion that
every triangle gadget base is followed by two edges of length $R/2$ on the left.
Inequality (**) holds since the values of the function $(x + 1\frac{8}{15}y)/\sqrt{x^2+y^2}$
are smaller than $2$ for every $x$ and $y$.

\item Both $u$ and $v$ are on the top side of the rectangle:
As noted before, if $u$ and $v$ are the endpoints of a side of 
an isosceles triangle gadget we have $\delta_G(u,v) \leq t|uv|$.
Let $u'$ and $v'$ be the projections of $u$ and $v$ on the line 
containing the top side of the rectangle (see Fig.~\ref{fig:connections}).
By the same arguments as in the previous case, 
we have 
$$\delta_G(u,v) \leq (\frac{2}{3}+\frac{1}{3} \cdot 1\frac{5}{6}/\frac{1}{2})|u'v'|< 2|uv| \leq t|uv|.$$

\item Both $u$ and $v$ are on the left or the right side of the rectangle:
We have $\delta_G(u,v)=|uv| < t|uv|$.
\end{enumerate}

\end{proof}
\begin{figure}[htb]
    \centering
        \includegraphics[width=0.9\textwidth]{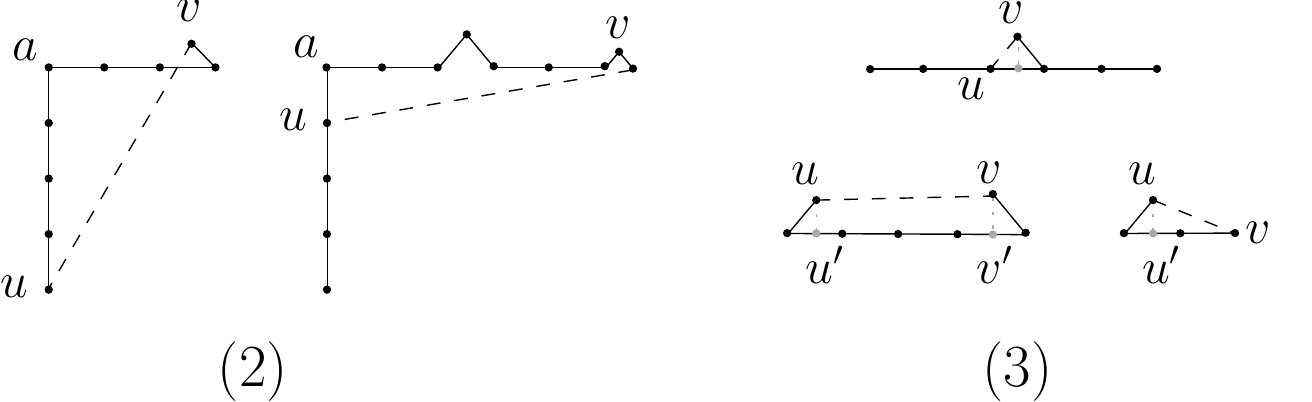}
    \caption{On the left, illustrations for case 2 in the proof of Lemma~\ref{lem:red};
    on the right, illustrations for case 3 in the same proof. }
    \label{fig:connections}
\end{figure}
Overall, we proved a proper reduction for $t \geq 2$.

%
\subsection{The reduction for $1<t<2$}\label{subsec:1<t<2}

We reduce from the PARTITION problem
and define the reduction output $(P,w)$ as follows.
The weight bound is $w=R(n+t+\frac{3}{2}+(n+\frac{3}{2})\frac{3t}{2})$ and the set of points $P$
resembles the one defined for $t\geq 2$. 
We still have a horizontal component, and left and right sides;
however, those are no longer perpendicular to the horizontal component.
The top right angles are enlarged by $\alpha_t=\arcsin(\frac{2}{3t^2}+\frac{1}{3t})$ 
and the rectangle shape turns into an isosceles trapezoid shape (see Fig.~\ref{fig:smallt}).

As before, the horizontal component is composed of a horizontal segment of length $R(n+2)$ 
sampled by $2n$ endpoints of $n$ isosceles triangles' bases, 
located among $n+1$ segments of length $R$
halved by a middle point. 
However, here the $i$-th isosceles triangle has sides of length $\frac{t}{2}x_i$, 
a base of length $x_i$, and thus an angle of $2\arcsin(1/t)$.

The right and left sides are of length $R/2(n+\frac{3}{2})\frac{3t}{2}$;
each is sampled by $\left\lceil (n+\frac{3}{2})\frac{3t}{2}\right\rceil$
points with regular distances of $R/2$ (possible except for the bottom point).

Overall, we get $|P|=4n+3+2\left\lceil (n+\frac{3}{2})\frac{3t}{2}\right\rceil$.
We denote by $p$ and $q$ the leftmost and rightmost points in $P$, respectively,
and by $p'$ and $q'$ the leftmost and rightmost points of the horizontal component, respectively.

The reduction as presented above might not be computable by a Turing machine in polynomial time,
since there might be points in $P$ with coordinates whose representation requires large number of bits.
This issue is resolved by a minor change as described in the end of this subsection.

We prove the correctness of the reduction, namely, 
that a subset $X'\subset X$ with $\sum_{x \in X'} x =R/2$ exists 
iff there exists a $t$-spanner for $P$ of weight at most $w$,
by following the same guidelines as for the proof for $t \geq 2$.

Note that the minimum connected graph over $P$ forms a path
$Q=\{p=p_1,...,p_m=q\}$ as depicted in Fig.~\ref{fig:smallt}
of length 
$$2R/2(n+\frac{3}{2})\frac{3t}{2} + (n+1)R + tR =R(n+t+1+(n+\frac{3}{2})\frac{3t}{2}).$$
Thus, the remaining weight is $R/2$.
We would like to examine how much should this path be shortened 
in order for it to be a proper $t$-spanning path between $p$ and $q$.
A proper $t$-spanning path between $p$ and $q$ is of length at most 
\begin{eqnarray*}
t|pq| &=& t(2\frac{R}{2}(n+\frac{3}{2})3t/2\sin\alpha_t + (n+2)R)\\
			&=& Rt(R(n+\frac{3}{2})\frac{3t}{2}\sin(\arcsin(\frac{2}{3t^2}+\frac{1}{3t}))+ n + 2)\\
			&=& Rt((n+\frac{3}{2})\frac{3t}{2}(\frac{2}{3t^2}+\frac{1}{3t}+ n + 2))\\
			&=& R((n+\frac{3}{2})\frac{3t}{2}(1-(\frac{2}{3}-\frac{2}{3t}))+ tn + 2t)\\
			&=& R((n+\frac{3}{2})\frac{3t}{2} + n + \frac{3}{2} + \frac{t}{2}).
\end{eqnarray*}
Therefore, a total shortening of $(t-1)R/2$ should be performed on the path
at the cost of at most $R/2$.
\begin{figure}[htb]
    \centering
        \includegraphics[width=0.8\textwidth]{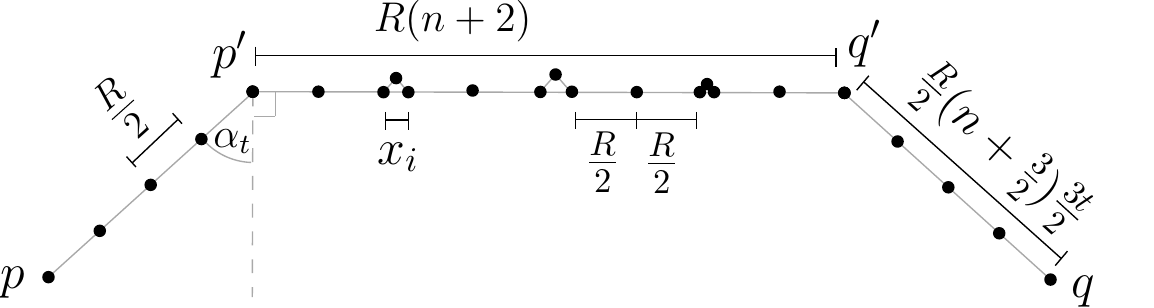}
    \caption{The set of points $P$ is depicted in black 
    and its minimum connected graph is depicted in gray as defined in the reduction for $1<t<2$.}
    \label{fig:smallt}
\end{figure}

Assuming there exists a $t$-spanner for $P$ of weight at most $w$,
let $E^+$ and $E^-$ be the sets of edges over $P$ that should be added and removed 
from $MST(P)$, respectively, in order to obtain a $t$-spanner $G=(P,E)$ 
that minimizes $weight(G) \leq w$.
By Claim~\ref{cl:short_edges} (which, by the same arguments, holds for this reduction as well), 
$E^+$ contains only edges of length at most $R$.
There are exactly four types of such edges 
(while ignoring edges that overlap subpaths in $MST(P)$, 
which are obviously redundant):

\begin{enumerate}
\item A base edge of an isosceles triangle gadget on the horizontal component.
\item An edge connecting a top vertex of an isosceles triangle on the horizontal component 
			with a point on the horizontal segment and closes a cycle of size three.
\item An edge connecting the left (resp. right) side with the horizontal component 
			and closes a cycle of size three.
\end{enumerate} 

One can verify that Observation~\ref{obs:disjoint} holds here as well and
we have that $G$ can be obtained from $MST(P)$ by applying 
a set of $t$-shortcuts that involve an addition of an edge of type 1, 2, or 3
and possibly removal of an edge in the closed cycle.
Here, the term \emph{gadget $t$-shortcut} refers to
the addition of an edge of type $1$. 
No removal of an edge is possible for $1<t<2$.


\begin{myclaim}\label{cl:tri2}
The most efficient possible $t$-shortcuts are \emph{gadget} $t$-shortcuts. 
Their efficiency is $t-1$.
\end{myclaim}

\begin{proof}
First we examine the efficiency of a gadget $t$-shortcut.
The benefit of the $i$-th gadget $t$-shortcut is $tx_i-x_i=(t-1)x_i$,
its cost is $x_i$, and hence its efficiency is indeed $t-1$.

We now examine every conceivable $t$-shortcut and show that it is 
less efficient than a gadget $t$-shortcut.

\begin{itemize}

\item Addition of an edge of type 2 forms an obtuse triangle. 
By Corollary~\ref{cor:obt}, this $t$-shortcut is less efficient than the gadget $t$-shortcut.

\item Addition of an edge of type 3 forms an isosceles triangle with angle \\
$\pi/2+\arcsin(\frac{2}{3t^2}+\frac{1}{3t})$. This angle is greater than $2\arcsin(1/t)$
for every $t>1$ and by Lemma~\ref{lem:isos}, we conclude that the gadget $t$-shortcut is more efficient.

\end{itemize}
\end{proof}

By Claim~\ref{cl:tri2}, in order to achieve a $(t-1)R/2$ shortening 
of the path connecting $p$ and $q$,
a set of gadget $t$-shortcuts 
with a total benefit of exactly $(t-1)R/2$ should be applied.
This notion, together with the following Lemma, 
serve us in the proof of Lemma~\ref{lem:red_smallt}.


\begin{lemma}\label{lem:triStretch}
Let $\triangle(psq)$ be an isosceles triangle with $|ps|=|sq|$ 
and let $q'$ be a point on $\overline{sq}$ such that $|pq| \geq |pq'|$,
then $\frac{|ps|+|sq|}{|pq|} \geq \frac{|ps|+|sq'|}{|pq'|}$.
\end{lemma}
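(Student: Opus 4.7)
The plan is to square the inequality to eliminate the square roots and reduce the statement to a simple algebraic fact. Setting $a = |ps| = |sq|$ and $a' = |sq'|$ with $0 \leq a' \leq a$, the desired inequality $\frac{|ps|+|sq|}{|pq|} \geq \frac{|ps|+|sq'|}{|pq'|}$ rewrites as $\frac{2a}{|pq|} \geq \frac{a+a'}{|pq'|}$. Since both sides are positive, this is equivalent to the polynomial inequality $4a^2|pq'|^2 \geq (a+a')^2|pq|^2$.

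Next, I would express $|pq|^2$ and $|pq'|^2$ using the law of cosines with $\theta = \angle psq$, namely $|pq|^2 = 2a^2(1-\cos\theta)$ and $|pq'|^2 = a^2 + a'^2 - 2aa'\cos\theta$. Substituting and expanding, I expect the difference $4a^2|pq'|^2 - (a+a')^2|pq|^2$ to collapse to $2a^2(a-a')^2(1+\cos\theta)$, which is nonnegative because $\cos\theta \geq -1$ (the boundary case $\theta = \pi$ is excluded since $psq$ is a genuine triangle).

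A cleaner alternative I would consider is to apply Stewart's theorem to the cevian $pq'$ of triangle $psq$, which yields $|pq'|^2 = (a-a')^2 + (a'/a)|pq|^2$. Substituting this into the target inequality makes $4a^2|pq'|^2 - (a+a')^2|pq|^2$ factor neatly as $(a-a')^2\bigl(4a^2 - |pq|^2\bigr)$, and the conclusion then reduces to exactly the triangle inequality $|pq| \leq |ps|+|sq| = 2a$, which is immediate.

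I do not expect any serious obstacle here: the lemma is a one-step algebraic consequence of the law of cosines (or, even more transparently, of Stewart's theorem combined with the triangle inequality). A minor point worth remarking is that the hypothesis $|pq| \geq |pq'|$ does not actually enter the algebra; it presumably serves to fix the geometric context in which the lemma will be invoked rather than to drive the proof.
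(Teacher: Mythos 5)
Your proof is correct, and it takes a genuinely different (and somewhat cleaner) route than the paper. The paper parametrizes by the angles $\gamma=\angle(psq)$ and $\beta=\angle(pq'q)$, expresses $|pq|$, $|sq'|$, $|pq'|$ via the law of sines, and reduces the claim to the trigonometric inequality $\cos(\frac{\gamma}{2})-\sin\beta-\sin(\beta-\gamma)\geq 0$; as printed this final inequality has a dropped factor of $2$ (it fails, e.g., for $\beta=\pi/2$ and small $\gamma$), though the intended inequality $2\cos(\frac{\gamma}{2})\geq\sin\beta+\sin(\beta-\gamma)$ does hold for all admissible angles since $\sin\beta+\sin(\beta-\gamma)=2\cos(\frac{\gamma}{2})\sin(\beta-\frac{\gamma}{2})$. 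You instead square the cross-multiplied inequality and work with squared lengths: your law-of-cosines computation does give $4a^2|pq'|^2-(a+a')^2|pq|^2=2a^2(a-a')^2(1+\cos\theta)\geq 0$, and the Stewart's-theorem variant reduces everything to $(a-a')^2(4a^2-|pq|^2)\geq 0$, i.e., to the triangle inequality. This buys a proof with no trigonometric identities and no case analysis on angles, and it makes transparent your correct side observation that the hypothesis $|pq|\geq|pq'|$ is never used -- the inequality holds for every $q'$ on the segment $\overline{sq}$; that hypothesis only records the geometric situation in which the lemma is applied in the proof of Lemma~\ref{lem:red_smallt}. The one thing to make explicit when writing this up is the verification of the algebraic collapse (which you phrased as an expectation); it does check out exactly as you predicted.
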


\begin{proof}
Let $\gamma = \angle(psq)$ and $\beta = \angle(pq'q)$,
then by the sines law, $|pq|=|ps|2\sin(\frac{\gamma}{2})$, \\
$|sq'|=|ps|\sin(\beta-\gamma)/\sin(\beta)$, and
$$|pq'|= |pq|\sin(\frac{\pi}{2}-\frac{\gamma}{2})/\sin(\beta)=
|ps|2\sin(\frac{\gamma}{2})\cos(\frac{\gamma}{2})/\sin(\beta).$$
Thus, 
\begin{eqnarray*}
 && \frac{|ps|+|sq|}{|pq|} \geq \frac{|ps|+|sq'|}{|pq'|}\\
 &\Leftrightarrow& \frac{2|ps|}{|ps|2\sin(\frac{\gamma}{2})} \geq 
 										\frac{|ps|+|ps|\sin(\beta-\gamma)/\sin(\beta)}{|ps|2\sin(\frac{\gamma}{2})\cos(\frac{\gamma}{2})/\sin(\beta)}\\
 &\Leftrightarrow& \frac{\cos(\frac{\gamma}{2})}{\sin(\beta)} \geq 
 										1+\frac{\sin(\beta-\gamma)}{\sin(\beta)}\\
 &\Leftrightarrow& 	\cos(\frac{\gamma}{2}) -	\sin(\beta) -	\sin(\beta-\gamma)	\geq 0.
\end{eqnarray*}

The last inequality indeed holds for every $0<\gamma,\beta<\pi$.
\end{proof}

\begin{figure}[htb]
    \centering
        \includegraphics[width=0.36\textwidth]{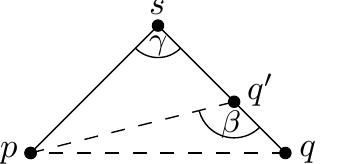}
    \caption{Illustration of Lemma~\ref{lem:triStretch}.}
    \label{fig:triangle_sf}
\end{figure}


\begin{lemma}\label{lem:red_smallt}
A subset $X'\subset X$ with $\sum_{x \in X'} x =R/2$ exists 
iff there exists a $t$-spanner for $P$ of weight at most $w$.
\end{lemma}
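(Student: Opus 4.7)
The plan is to follow the structure of the proof of Lemma~\ref{lem:red}, splitting the biconditional into its two directions and leveraging the efficiency accounting established by Claim~\ref{cl:tri2} together with Lemma~\ref{lem:triStretch}, which was proved precisely for this occasion.

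For the $[\Leftarrow]$ direction, I would argue by contradiction. Assume a $t$-spanner $G$ of weight at most $w$ exists but no partition does. By the budget computation preceding the lemma, the $p$-to-$q$ path in $MST(P)$ must be shortened by $(t-1)R/2$ at additional cost at most $R/2$. Claim~\ref{cl:tri2} says the only $t$-shortcuts with efficiency $\geq t-1$ are gadget $t$-shortcuts, whose efficiency is exactly $t-1$; every other shortcut is strictly less efficient. Hence the total cost of the applied gadget shortcuts must equal exactly $R/2$, so the $x_i$ indexed by the chosen gadgets sum to exactly $R/2$, contradicting the non-existence of a partition.

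For the $[\Rightarrow]$ direction, given $X' \subset X$ with $\sum_{x \in X'} x = R/2$, let $G=(P,E)$ be obtained from $MST(P)$ by adding the base edge of the $i$-th gadget for every $x_i \in X'$ (no edge removals are possible when $1<t<2$). Its weight is $weight(MST(P))+R/2 = w$, and the cumulative benefit $(t-1)R/2$ brings the $p$-$q$ path length in $G$ down to $t|pq|$. It remains to verify $\delta_G(u,v) \leq t|uv|$ for every other pair, and I would split into cases paralleling Lemma~\ref{lem:red}: (a) $u,v$ on opposite slanted sides (reduce to the $p,q$ bound via $|uv| \geq |pq|$); (b) $u,v$ on the same slanted side (straight-line edges give ratio $1$); (c) one on a slanted side and one on the horizontal component (bound the corner detour using the slant angle $\alpha_t$ and the fact that each slanted-side segment has length $R/2$ while each gadget side has length $\tfrac{t}{2}x_i \leq \tfrac{t}{4}R$, by an inequality analogous to $(**)$ in Lemma~\ref{lem:red}); (d) both on the horizontal component.

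The main obstacle will be case (d). When $u,v$ both lie on the horizontal component but at least one of them is interior to a gadget triangle's slanted side, the shortest $u$-to-$v$ path in $G$ either uses the newly-added base (trivially giving ratio $\leq t$ for endpoints on the base by construction), or climbs partway up one slanted side of a gadget and descends the other. Lemma~\ref{lem:triStretch} applied to the isosceles gadget triangle is exactly what reduces this second situation to the extremal case in which both endpoints are the full base endpoints, where the ratio equals $t$. For pairs lying between gadgets on the horizontal segments of length $R$, a direct estimate analogous to case 3 of Lemma~\ref{lem:red} suffices. Verifying that the choices of $\alpha_t = \arcsin\!\bigl(\tfrac{2}{3t^2}+\tfrac{1}{3t}\bigr)$ and the gadget parameters make all inequalities in case (c) hold uniformly for every $1<t<2$ will be the numerical heart of the argument, but poses no conceptual difficulty.
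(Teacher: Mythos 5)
Your overall architecture (budget accounting for $[\Leftarrow]$, case analysis over pairs for $[\Rightarrow]$) matches the paper's, and your cases (b), (c), (d) and the $[\Leftarrow]$ direction are essentially the paper's cases 4, 2, 3 and its $[\Leftarrow]$ argument. But your case (a) contains a genuine error, and it is precisely the case where the new machinery of this subsection is required. You propose to handle $u,v$ on opposite slanted sides by ``reducing to the $p,q$ bound via $|uv|\geq|pq|$.'' That inequality is false here: since the top angles are enlarged to $\pi/2+\alpha_t$, the legs splay outward going down, so the trapezoid is \emph{wider} at the bottom and $p,q$ (the two bottom corners) realize the \emph{maximum} distance. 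Indeed $|pq|=|p'q'|+2(\text{leg length})\sin\alpha_t>|p'q'|$, so for $u,v$ strictly above $p,q$ one has $|uv|<|pq|$, and the ratio $\delta_G(u,v)/|uv|$ does not follow from monotonicity. The paper handles this case by introducing the apex $s$ where the extensions of the two legs meet, writing $\delta_G(u,v)=|us|+|sv|-\epsilon$ with $\epsilon=|p's|+|sq'|-\delta_G(p',q')$, and invoking Lemma~\ref{lem:triStretch} to show $(|us|+|sv|)/|uv|\leq(|ps|+|sq|)/|pq|$ before subtracting $\epsilon/|uv|\geq\epsilon/|pq|$. Some quantitative argument of this kind is unavoidable; your proposal as written would not close this case.

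Relatedly, you misallocate Lemma~\ref{lem:triStretch} to case (d). There is no need for it there: the points of $P$ on the horizontal component are base endpoints, segment midpoints, and triangle apexes (none lie in the interior of a gadget side), and each gadget side has length $\tfrac{t}{2}x_i$, i.e.\ exactly $t$ times its horizontal extent, so projecting $u,v$ onto the horizontal line gives $\delta_G(u,v)\leq t|u'v'|\leq t|uv|$ directly. The substance of Lemma~\ref{lem:triStretch} is needed for the opposite-legs case, not for the horizontal component.
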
 

\begin{proof}

[$\Leftarrow$] Assume towards contradiction that 
there exists a $t$-spanner for $P$ of weight at most $w$, however, 
a subset $X'\subset X$ with $\sum_{x \in X'} x =R/2$ does not exist. 
According to our construction, this implies that a set of gadget $t$-shortcuts 
with a total benefit of exactly $(t-1)R/2$ does not exist. 
This means there is no graph over $P$
of weight at most $w$ that contains a path connecting $p$ and $q$ of length at most $t|pq|$ 
in contradiction to the existence of a $t$-spanner for $P$ of weight at most $w$.

[$\Rightarrow$] Assume that a subset $X'\subset X$ with $\sum_{x \in X'} x =R/2$ exists.
Let $G=(P,E)$ denote the graph obtained by applying the gadget $t$-shortcuts
in the triangles that correspond to the elements in $X'$.
We show that $G$ admits a $t$-spanner for $P$, i.e., 
for every two points $u$ and $v$ in $P$, $\delta_G(u,v)\leq t|uv|$.
As we have already observed, there exists a $t$-spanning path between $p$ and $q$ in $G$.
Next we consider all other pairs of points $\{u,v\}\neq \{p,q\}$ in $P$:
\begin{enumerate}

\item One of $\{u,v\}$ is on the right side of the trapezoid 
and the other is on the left side (see Fig.~\ref{fig:smallt_sf+}):
Assume w.l.o.g. that $u$ is on the left side and $v$ is on the right side.
Let $s$ be the intersection point of the extensions of the segments $\overline{pp'}$ and $\overline{qq'}$
(see Fig.~\ref{fig:smallt_sf+}).
Let $\epsilon=|p's|+|sq'|-\delta_G(p',q')$,
we show that $(|us|+|sv|)/|uv| \leq (|ps|+|sq|)/|pq|$
and conclude,
$$\frac{\delta_G(u,v)}{|uv|} = \frac{|us|+|sv|-\epsilon}{|uv|} \leq 
\frac{|ps|+|sq|}{|pq|}-\frac{\epsilon}{|uv|} \leq \frac{|ps|+|sq|}{|pq|}-\frac{\epsilon}{|pq|}= t.$$

If $\overline{uv}$ is parallel to $\overline{pq}$, then 
due to the similarity of the triangles $\triangle(usv)$ and $\triangle(psq)$
we have $(|us|+|sv|)|uv| = (|ps|+|sq|)/|pq|$.

Otherwise, assume w.l.o.g. that $|up|<|vq|$ and let $v'$ be a point
on $\overline{qq'}$ such that $\overline{uv'}$ is parallel to $\overline{pq}$.
Note that $\angle(uvv')>\pi/2$ and thus, by Lemma~\ref{lem:triStretch} 
we have, $(|us|+|sv|)/|uv| \leq (|us|+|sv'|)/|uv'| \leq (|ps|+|sq|)/|pq|$.

\item One of $\{u,v\}$ is on the right or left side of the trapezoid 
and the other is on the horizontal component:
Assume w.l.o.g. that $u$ is on the left side and $v$ is on 
the horizontal component.
Recall our note regarding every triangle gadget base being
followed by two edges of length $R/2$ on the left and thus 
$\delta_G(p',v) \leq (2/3+t/3)|p'v|$.
By the cosines law we have 
$$|uv| \geq \sqrt{|up'|^2+|p'v|^2- 2|up'||p'v|\cos(\pi+\alpha_t)}.$$
(we do not use equality since $v$ may not lie in the horizontal segment itself, 
but on a top vertex of a triangle gadget,
and than $\angle(up'v)>\alpha_t$).
Hence, we receive
\begin{eqnarray*}
\delta_G(u,v) &=& \delta_G(u,p') + \delta_G(p',v)\\
							&\leq & |up'| + (\frac{2}{3}+\frac{t}{3})|p'v|\\
							&= & \sqrt{|up'|^2 + (\frac{2}{3}+\frac{t}{3})^2|p'v|^2 + 2|up'||p'v|(\frac{2}{3}+\frac{t}{3})}\\
							&\leq& \sqrt{t^2|up'|^2 + t^2|p'v|^2 - 2|up'||p'v|t^2\cos(\pi/2+\alpha_t)} \leq t|uv|						
\end{eqnarray*}

\item Both $u$ and $v$ are on the horizontal component:
Let $u'$ and $v'$ be the projections of $u$ and $v$ on the horizontal segment.
We have $\delta_G(u,v) \leq t|u'v'| \leq t|uv|$.

\item Both $u$ and $v$ are on the left or the right side of the trapezoid:
We have $\delta_G(u,v)=|uv| < t|uv|$.

\end{enumerate}

\end{proof}

\begin{figure}[htb]
    \centering
        \includegraphics[width=0.8\textwidth]{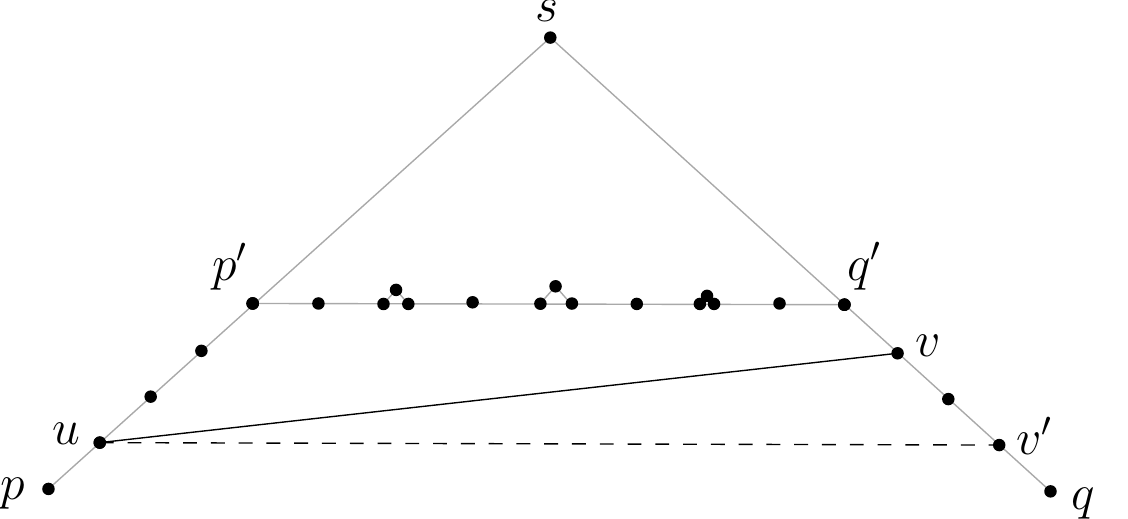}
    \caption{Illustration of cases 1 in the proof of Lemma~\ref{lem:red_smallt}.}
    \label{fig:smallt_sf+}
\end{figure}

The only thing left to prove is that the above reduction can be computed 
by a Turing machine in polynomial time.
Clearly, $P$ is of polynomial size in $n$; however, not all the points in $P$ 
can necessarily be represented in 
polynomial number of bits.
More precisely, setting the coordinates of the points on the horizontal
segment to integer values may
cause the values of other points' coordinates to have large representation size.
This issue can be easily solved by rounding the values of the coordinates
of the rest of the points to have up to $n$ digits after the decimal point as follows.
The y-coordinates of the triangle gadgets' top vertices are rounded down
(note that the x-coordinate has polynomial size representation), 
and for the points on the sides, the rounding is done in such a way that
increases the top angles of the trapezoid and does not increase their length.

Note that as $n$ increases, the efficiency of the gadget $t$-shortcuts
increases and tends to its initial value,
while the efficiency of the other $t$-shortcuts does not exceed its initial value.
Hence, for sufficiently large values of $n$, the absence of 
a proper partition of the reduction input $X$ guarantees that a $t$-spanner for $P$
of weight at most $w$ does not exists.
In case a proper partition of $X$ exists,
the suggested spanner over the output set $P$ is still of weight smaller than $w$
and has stretch factor at most $t$. 
Thus, for sufficiently large values of $n$, 
the reduction remains correct after the rounding.
Obviously, the PARTITION problem remains NP-hard after restricting $n$ to be larger
than some constant $n_0$. 

In conclusion, after presenting proper reductions for all values of $t$, the main theorem follows.

\begin{theorem}\label{theo:main}
The decision problems LWS$t$ and LWPS$t$, and the optimization problems MWS$t$, MWPS$t$, and MDG are NP-hard.
\end{theorem}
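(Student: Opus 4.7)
The plan is to assemble the pieces already established. For LWS$t$, I would invoke the two reductions from PARTITION constructed in Sections~\ref{subsec:t>2} and~\ref{subsec:1<t<2}: for $t \geq 2$ the correctness is Lemma~\ref{lem:red}, and for $1 < t < 2$ it is Lemma~\ref{lem:red_smallt}. In both cases, given an input $X = \{x_1, \dots, x_n\}$ to PARTITION with sum $R$, the reduction produces in time polynomial in $n$ and $\log R$ (after the rounding argument for the small-$t$ case) a point set $P$ of linear size together with a weight bound $w$, such that $X$ admits a balanced partition iff $(P,w)$ is a "yes" instance of LWS$t$. Since PARTITION is NP-hard, this establishes NP-hardness of LWS$t$ for every real $t > 1$.

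The NP-hardness of MWS$t$ follows by a trivial reduction from LWS$t$: given $(P,w)$, compute a minimum weight $t$-spanner for $P$ and accept iff its weight is at most $w$. Similarly, for MDG, given $(P,w)$ as an LWS$t$ instance with the fixed constant $t$, the minimum dilation of a Euclidean graph on $P$ of weight at most $w$ is $\leq t$ iff some $t$-spanner of weight at most $w$ exists; hence any algorithm for MDG decides LWS$t$.

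For the planar versions LWPS$t$ and MWPS$t$, I would verify that the $t$-spanner exhibited in the forward ($\Rightarrow$) directions of Lemma~\ref{lem:red} and Lemma~\ref{lem:red_smallt} is actually a plane graph. Those spanners are obtained from $MST(P)$ by applying a disjoint collection of gadget $t$-shortcuts, each of which either adds a triangle base edge (for $1 < t < 2$) or additionally removes one of the two isosceles sides (for $t \geq 2\tfrac{1}{5}$). By inspection of the constructions (Figs.~\ref{fig:bigt} and~\ref{fig:smallt}): the triangle gadgets are pairwise disjoint and lie on one side of the horizontal component, each added base lies inside its own gadget triangle and crosses no other edge of the spanner, and $MST(P)$ itself is plane. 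Hence the constructed spanner is plane, giving the "yes" direction for LWPS$t$. The "no" direction is inherited from LWS$t$ since any plane $t$-spanner is in particular a $t$-spanner. MWPS$t$ hardness then follows from LWPS$t$ just as MWS$t$ follows from LWS$t$.

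The only place where a genuine (though still routine) argument is needed is the planarity verification in the third step, because the statement of Lemmas~\ref{lem:red} and~\ref{lem:red_smallt} only guarantees the dilation and weight bounds, not planarity. Once one traces the construction and observes that every gadget $t$-shortcut introduces an edge interior to its own triangle gadget, all five NP-hardness claims bundled in Theorem~\ref{theo:main} are immediate.
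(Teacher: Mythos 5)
Your proposal is correct and follows essentially the same route as the paper: LWS$t$ from the two PARTITION reductions (Lemmas~\ref{lem:red} and~\ref{lem:red_smallt}), MWS$t$ and MDG by trivial reductions from LWS$t$, and the plane variants by observing that the constructed spanners (an $MST$ path plus disjoint gadget base edges) are plane while the negative direction is inherited since every plane $t$-spanner is a $t$-spanner. The paper asserts the planarity in a single sentence; your explicit verification of it is the only added content and is consistent with the construction.
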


Although all the problems addressed in Theorem~\ref{theo:main} refer to the restricted case 
where the underlying graph is the complete Euclidean graph, 
our results, obviously, apply to the cases where the underlying graph is
a general geometric graph (not necessarily the complete graph) and 
where the underlying graph is a general weighted graph (not necessarily geometric).
This is stated in the following corollary.

\begin{corollary}
The variations of the decision problems LWS$t$ and LWPS$t$ and the optimization problems MWS$t$, MWPS$t$, and MDG 
addressing a geometric or general weighted underlying graph are NP-hard.
\end{corollary}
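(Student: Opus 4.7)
The plan is to derive the corollary as an immediate consequence of Theorem~\ref{theo:main} by observing that the restricted problems proved NP-hard in the theorem are themselves special cases of the generalized variants considered here. Concretely, for any point set $P$ returned by the reductions in Section~\ref{sec:red}, the complete Euclidean graph $G_E$ on $P$ is simultaneously (i) a bona fide geometric graph in the sense of the Gudmundsson--Smid variant, and (ii) a bona fide weighted graph whose weight function happens to equal Euclidean distance. Hence an instance of the restricted problem can be re-interpreted, without any modification, as an instance of either generalized problem.

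The first step is to fix notation: given an instance $(P,w)$ of, say, LWS$t$, form the instance $(G_E,w)$ of the geometric-graph variant, where $G_E$ is the complete Euclidean graph on $P$. A spanning subgraph of $G_E$ of dilation at most $t$ with respect to $G_E$ is, by definition of the Euclidean graph and by the fact that $\delta_{G_E}(u,v)=|uv|$ for all $u,v\in P$, exactly a $t$-spanner for $P$ in the original sense. Therefore a yes-instance maps to a yes-instance and vice versa, and any polynomial-time algorithm for the generalized variant decides LWS$t$; NP-hardness follows. The same mapping handles LWPS$t$, MWS$t$, MWPS$t$, and MDG verbatim, since in every case the optimization target (weight, respectively dilation) and the feasibility constraints are preserved by viewing $G_E$ as the underlying graph.

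The second step is the analogous argument for the general weighted variant. Here we again feed $G_E$ as the underlying weighted graph, noting that its edge weights are simply Euclidean distances; the reductions already produce point sets with rational (and, after the rounding discussed in Subsection~\ref{subsec:1<t<2}, polynomially representable) coordinates, so the induced weighted graph has a polynomial-size representation. Feasibility, spanner dilation, and weight are all identical under the two interpretations, so hardness transfers.

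I anticipate no real technical obstacle: the corollary is essentially the formalization of the remark made in the introduction that restricting to the complete Euclidean graph can only make the problem easier, not harder. The only care needed is to verify, for each of the five problems listed, that the definition of the generalized variant does indeed admit the complete Euclidean graph as a valid input and produces the same feasible set of output graphs; this is a straightforward definition check for each problem in turn.
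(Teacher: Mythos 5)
Your proposal is correct and matches the paper's (implicit) argument exactly: the paper justifies this corollary by the one-line observation that the complete-Euclidean-graph problems of Theorem~\ref{theo:main} are special cases of the generalized variants, which is precisely the special-case reduction you spell out. The extra care you take (checking $\delta_{G_E}(u,v)=|uv|$ and polynomial representability of $G_E$) is a harmless elaboration of the same idea.
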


\bibliographystyle{abbrv}
\bibliography{ref}

\end{document}